\newtheorem{theorem}{Theorem}
\newtheorem{lemma}{Lemma}
\newcommand{\prot}{{\sc D2M}}
\newcommand{\cone}{{\sc CONE}}
\def\BibTeX{{\rm B\kern-.05em{\sc i\kern-.025em b}\kern-.08em
    T\kern-.1667em\lower.7ex\hbox{E}\kern-.125emX}}
\newenvironment{customlegend}[1][]{%
    \begingroup
    \csname pgfplots@init@cleared@structures\endcsname
    \pgfplotsset{#1}%
}{%
    \csname pgfplots@createlegend\endcsname
    \endgroup
}%
\def\addlegendimage{\csname pgfplots@addlegendimage\endcsname}
\begin{document}

\title{D2M: A Decentralized, Privacy-Preserving, Incentive-Compatible Data Marketplace for Collaborative Learning\\
}

\author{\IEEEauthorblockN{
Yash Srivastava\IEEEauthorrefmark{2},
Shalin Jain\IEEEauthorrefmark{2},
Sneha Awathare\IEEEauthorrefmark{1} and 
Nitin Awathare\IEEEauthorrefmark{2} }

\IEEEauthorblockA{\IEEEauthorrefmark{2} Department of Computer Science and Engineering, Indian Institute of Technology, Jodhpur.}
\IEEEauthorblockA{\IEEEauthorrefmark{1} School of Mathematics and Computational Sciences, Eastern University Pennsylvania.}
\IEEEauthorblockA{\{shrivastava.12, jain.75, nitina\}@iitj.ac.in, nitina@cse.iitb.ac.in, sneha.awathare@eastern.edu}}



\maketitle

\begin{abstract}
The rising demand for collaborative machine learning and data analytics calls for secure and decentralized data sharing frameworks that balance privacy, trust, and incentives. Existing approaches, including federated learning (FL) and blockchain-based data markets, fall short: FL often depends on trusted aggregators and lacks Byzantine robustness, while blockchain frameworks struggle with computation-intensive training and incentive integration.  

We present \prot, a decentralized data marketplace that unifies federated learning, blockchain arbitration, and economic incentives into a single framework for privacy-preserving data sharing. \prot\ enables data buyers to submit bid-based requests via blockchain smart contracts, which manage auctions, escrow, and dispute resolution. Computationally intensive training is delegated to \cone\ (\uline{Co}mpute \uline{N}etwork for \uline{E}xecution), an off-chain distributed execution layer. To safeguard against adversarial behavior, \prot\ integrates a modified YODA protocol with exponentially growing execution sets for resilient consensus, and introduces Corrected OSMD to mitigate malicious or low-quality contributions from sellers. All protocols are incentive-compatible, and our game-theoretic analysis establishes honesty as the dominant strategy.  

We implement \prot\ on Ethereum and evaluate it over benchmark datasets---MNIST, Fashion-MNIST, and CIFAR-10---under varying adversarial settings. \prot\ achieves up to 99\% accuracy on MNIST and 90\% on Fashion-MNIST, with less than 3\% degradation up to 30\% Byzantine nodes, and 56\% accuracy on CIFAR-10 despite its complexity. Our results show that \prot\ ensures privacy, maintains robustness under adversarial conditions, and scales efficiently with the number of participants, making it a practical foundation for real-world decentralized data sharing.  
\end{abstract}



\section{Introduction}
\label{sec:intro}
The growing demand for collaborative machine learning and data analytics has highlighted the need for \textit{secure, decentralized data sharing}~\cite{kairouz2021advancesopenproblemsfederated, ChiangStrategy}. The applications that need data sharing range from cross-silo healthcare~\cite{Rieke2020} and financial modeling~\cite{abbas} to IoT and smart city analytics~\cite{edge19}, where raw data are highly sensitive. To facilitate this data market platforms are proposed where data owners share and monetize assets via token incentives~\cite{CastroDMP, Agarwal2019, DRIESSEN}.  However, they centralize data in the cloud, raise privacy and trust concerns~\cite{challengesclinicaldata}.

Federated learning (FL) has emerged to mitigate this by keeping data local and sharing only model updates~\cite{yang2019federatedmachinelearningconcept}. In this regard, FL research has yielded many optimizations (e.g., adaptive client sampling) and privacy mechanisms.  Recent work such as UnifyFL~\cite{s2025unifyflenablingdecentralizedcrosssilo}, martFL~\cite{li2023martfl} and Dealer~\cite{liu2021dealer} introduce economic models and privacy guarantees in FL settings. However, majorities of FL systems 
still rely on a \textit{trusted aggregator} and are vulnerable to poisoning or privacy attacks~\cite{banerjee2020}. Furthermore, they lacks \textit{Byzantine robustness}, or rely on central trust, or treat incentives separately from computation, i.e., getting highest economic gains has been neglected which should have incentivizes everyone to participate in such a data market/sharing platform. 
Furthermore, majority of these works suffer other inefficiencies such as uniform sampling and expensive Shapley-value payment calculations~\cite{CastroDSC}.

In contrast, blockchain-based frameworks eliminate central intermediaries for lightweight computations but face challenges in supporting computation-intensive training and sophisticated incentive schemes~\cite{Privado}. For example, \cite{Zyskind2015} proposed \textit{Enigma}, a peer-to-peer multi-party computation(MPC) platform with a blockchain controller to enable privacy-guaranteed computation. \cite{Dong} designed a blockchain-backed IoT data trading platform using Ethereum smart contracts for escrow and periodic checkpoints. \cite{banerjee2020} build a decentralized content marketplace using Tahoe-LAFS and Hyperledger Fabric.

Given this, we can conclude that a fully end-to-end system enabling secure, privacy-preserving data sharing-while supporting the execution of computationally intensive tasks and ensuring robustness against Byzantine users-does not yet exist.

In this paper, we propose \textbf{\prot}, a decentralized data marketplace that unifies federated learning, economic incentives, and blockchain\footnote{Note that \prot\ is compatible with any distributed system; however, we choose blockchain to provide more transparent and reliable accounting of monetary transactions among users—in our case buyers, sellers, and \cone\ nodes (responsible for executing computationally intensive tasks such as model training in a federated setup).} arbitration into one umbrella achieving the goal of secure, privacy-preserving data sharing. The core idea of \textbf{\prot} is to enable distributed model training using data from multiple owners, in a way that meets the requirements of the user or data buyer—without exposing the owners' raw data or the buyer's complete model.

In this regards, the data buyer submits a bid-based request on the blockchain, which is processed through an auction. A blockchain arbitrator escrows payments and mediates disputes for the data request. Since current blockchains lack support for computationally intensive tasks~\cite{tuxedo, renoir} such as training machine learning models, execution is carried out off-chain on a distributed system called \cone\ (\uline{Co}mpute \uline{N}etwork for \uline{E}xecution). To protect \prot\ from Byzantine behavior in \cone, a randomized set of compute nodes is selected using \textit{YODA}~\cite{DasYODA}, but unlike naive YODA, we employ an exponentially growing execution set across rounds. YODA, through MiRACLE’s multi-round hypothesis testing, ensures consensus even in the presence of adversaries. The use of an exponentially increasing execution set yields faster convergence than a fixed-size set.

To ensures the safeguard against the model discloser, \prot\ performs the first level of the model training at the seller side while the global training takes place at the \cone\ that aggregates the weights of the local models using federated learning setup. 
These sellers are selected by the blockchain contract based on the data requests placed by the data buyer. In other words, only those sellers which holds the data relevant to the buyers requests are involved in the process. In our setting, sellers may act maliciously by performing the initial (local) training using corrupted data, or they may simply possess low-quality data that does not align with the buyer’s requirements. To address this, we propose Corrected OSMD, which combines the strengths of OSMD~\cite{ZhaoOSMD} and Corrected KRUM~\cite{DuBlockchainDML}. Corrected OSMD effectively mitigates the influence of malicious sellers.

Once the training accomplish each seller earns a rewards to provide the data. To make it fair \prot\ employs fair distribution of payments to sellers which has most relevant data, which is achieved with the adaption of the recent online mirror descent client-sampling method \cite{ZhaoOSMD}. 
It is important to note that, all protocols in \prot\ are incentive-compatible. Using which we prove that \textbf{\prot} ensures a game theoretic security, meaning any deviating party would incurs a loss in term of token earned. 

Lastly, we have implemented \prot\ in a fully distributed setup with Ethereum~\cite{wood2014ethereum} 
as the underlying blockchain network consisting of 28 nodes. We extensively evaluate \prot\ on benchmark datasets--MNIST~\cite{mnist}, CIFAR-10~\cite{cifar}, and Fashion-MNIST~\cite{fashionmnist} with different number of sellers (ranging from 50 to 200) and varying proportions of malicious \cone\ nodes.

\prot\ achieves up to 98.75\% accuracy on MNIST and 90.13\% on Fashion-MNIST with minor drops ($<$3\%) till 30\% byzantine nodes. On CIFAR-10, accuracy reaches 56.5\% with similar degradation under adversarial settings. \prot\ also exhibits linear growth of overhead with increasing number of participants and also does not lead to excessive increase in rounds hence time with increasing number of malicious actors. These results confirm \prot's ability to preserve privacy, ensure robustness, and scale to real-world decentralized data sharing scenarios.

\subsection{Key Contributions}
We summarize the novel aspects of \prot\ as follows:
\begin{itemize}

\item We propose \prot, enabling on-chain auctions for privacy-preserving data sharing., which consists of Buyer, Seller, and \cone\ nodes.  

\item To counter malicious sellers providing low-quality data, we introduce Corrected OSMD, and to address adversarial \cone\ nodes, we employ a modified version of YODA.

\item A game-theoretic analysis formally demonstrates that honesty prevails as the dominant strategy.  

\item Finally, we implement \prot\ in a fully distributed setup using Ethereum~\cite{wood2014ethereum} as the underlying blockchain, and evaluate it on three benchmark datasets, demonstrating the efficiency of \prot.

\end{itemize}

The rest of the paper is organized as follows: Section \ref{sec:SystemDesign} presents the system design of \prot, including its components and interactions. Section \ref{sec:offchain} details off-chain computation using \cone\ nodes and Corrected OSMD. Section \ref{sec:game} provides the game-theoretic proof of honest dominance. Section \ref{sec:evaluation} reports our evaluation and observations. Section \ref{sec:related} reviews related work. Finally, Section \ref{sec:futureWork} concludes with limitations and future directions.

\section{System Design, Notations, and Workflow}
\label{sec:SystemDesign}

\prot\ is a data marketplace that facilitate the distributed training of a machine learning model without revealing data as well as a complete model to the non-owners. \prot\ is primarily 
designed with four entities--data recipients/buyers who need data for their machine learning tasks, data providers/sellers who own and provide data, arbiters (blockchain) to handle payments and disputes, and, \cone\ to handle computationally-intensive tasks. Each of these entities interact with each other through the blockchain transactions. The overall workflow that shows interaction between these entities proceeds in two phases:
\begin{enumerate}
  \item \emph{On‐chain Auction Phase}:  
    Buyers post bids on the blockchain; matching bids within a \(T\)-block window enter a transparent auction; the highest bid wins; and eligible sellers are identified via the data registry.
  \item \emph{Off‐chain Computation Phase}:  
    A YODA‐selected execution set orchestrates iterative, privacy‐preserving model training.  Sellers compute local updates, compute nodes aggregate local updates via OSMD + Corrected Krum, and MIRACLE ensures Byzantine‐robust consensus.  Iterations continue until the buyer’s target metric is met; then the final model is delivered on‐chain and funds are automatically disbursed.
\end{enumerate}

The successful execution of these two phases involves interactions among multiple entities, which we discuss in detail next.

\subsection{System Actors and their notations} 
As mentioned earlier in  \prot's consists of Buyers, Sellers, Arbiters, and \cone. A brief detail of each is given below.

\begin{itemize}
    \item \textbf{Buyers}: Buyers, represented by $\mathcal{B}=\{\mathcal{B}_1, \mathcal{B}_2, \cdots, \mathcal{B}_N\}$, are end users of \prot\ who require datasets to meet specific needs, such
    as data attributes, size, quality, or geographical relevance. They initiate the process by submitting a bid $\mathcal{R}$, defined by four parameters: the bid amount $Amt$; metadata (tags) $\xi$ for retrieving relevant datasets; the model $\mathcal{K}$, which is iteratively trained; and a metric $M$ with threshold $\tau$, used to evaluate the model’s quality and determine the extent of training.

    \item \textbf{Sellers}: Sellers, denoted by $\mathcal{S} = \{\mathcal{S}_1, \mathcal{S}_2, \cdots, \mathcal{S}_M\}$, are another end user of the \prot\ that owns datasets relevant to the buyers’ requirements. These datasets can range from structured and unstructured data to specialized domain-specific collections. Sellers participate by offering their data in response to buyer bids. They are incentivized through an auction-based payment mechanism (facilitated by Arbitrage-explained next), where the revenue is directly tied to the quality and relevance of their data.

    \item \textbf{Arbitrage}: Arbitrage, represented by $\mathcal{A}$, is fully distributed blockchain system that guarantee accuracy, fairness, and openness in all data exchanges. Furthermore, Arbitrage is responsible for confirming the integrity of transactions between buyers and sellers, validating bids, and settling disagreements. Moreover, Arbitrage ensure fair decisions under Byzantine fault-tolerant consensus and enforce buyer-defined evaluation metrics. Note that \prot\ would work with any other distributed system. However, we choose blockchain for better accounting of the monitory transactions between buyers, sellers, and \cone\ node.

    \item \textbf{Compute Network (\cone)}: \cone, denoted by $\mathcal{C}$, is a decentralized system for executing computationally intensive tasks, specifically multiple rounds of the Corrected OSMD algorithm, which preserves sellers’ data privacy. Details of this algorithm and its role in the system are provided in Section~\ref{sec:workflow}.

\end{itemize}

 \begin{figure}
    \centering
    \includegraphics[width=0.68\linewidth]{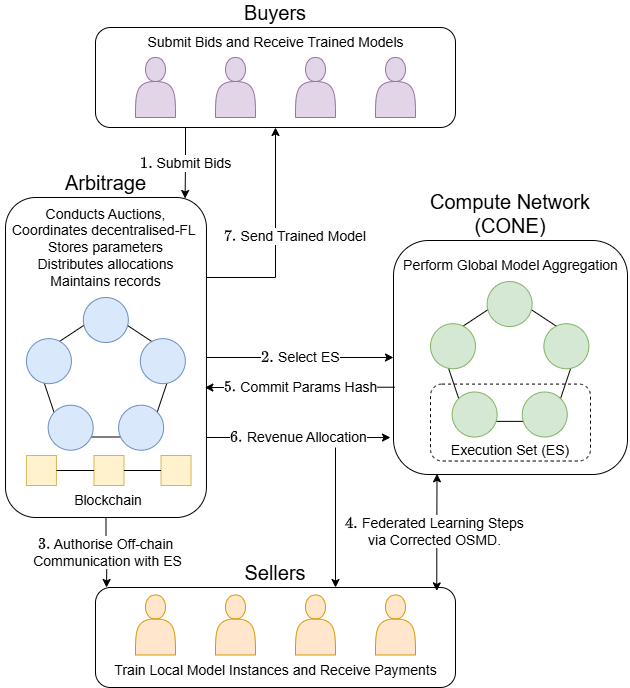}
    \caption{System actors and interactions in \prot. Buyers and Sellers interact through Arbitrage and Compute Nodes to facilitate secure, privacy-preserving computation on decentralized data.}
    \label{fig:dist-design}
    \end{figure}

\prot\ facilitate interaction between end users, Buyers and Sellers, through Arbitrage and Compute nodes, to establish a secure, transparent, and decentralized marketplace for data sharing. Next we will discuss this interaction in details.

\subsection{System Workflow} 
\label{sec:workflow}
To commence the workflow any Buyer, say $\mathcal{B}_i$ place a data request (bid) \(\mathcal{R} = \{\xi, Amt, \mathcal{K}, M, \tau\}\) on to the blockchain. The $\mathcal{R}$ is placed onto the blockchain using a blockchain transaction, say $tx_{\mathcal{B}_i}$, which will trigger a method ($startAuction$) in \prot's smart contract (Algorithm~\ref{alg:auction}) that we assumed to be already deployed before the workflow is commenced. 

If an auction for specific tags has already been initiated by another buyer and is still active, a new buyer can place a bid using $placeBid$. When the auction closes (i.e., after a timeout), the highest bidder wins. The contract then invokes the $IdentifyMatchingDatasets$ method to select sellers whose data matches tag $\xi$. Following this, off-chain computation begins and proceeds in rounds, which is facilitated by Algorithm~\ref{alg:workflow}. In each round, it selects a subset of \cone\ nodes using the YODA sortition algorithm. To mitigate the influence of malicious CONE nodes, YODA progressively increases the subset size until a majority consensus is reached.

The off-chain computation involves aggregating model weights trained locally by individual sellers whose datasets match the tags $\xi$. Existing approaches typically use one of two methods: OSMD or Corrected KRUM. 

The OSMD algorithm enables the integration of multiple sellers by ensuring fair budget allocation and high-quality data aggregation through adaptive sampling based on utility maximization. However, despite its utility-driven sampling, OSMD averages the updates from all sampled sellers. As a result, if any Byzantine (malicious) sellers are included in the sample, their faulty updates also influence the final model.


Corrected KRUM, on the other hand, selects a single update that is closest to the mean effect of all sellers' updates in terms of krum scores. However, relying solely on Corrected KRUM poses a drawback: it does not account for data quality. If most sellers have low-quality data—for instance, 10 with poor data and only 3 with high-quality data—the mean skews toward the poor updates, and Corrected KRUM may end up selecting a suboptimal update.

To address these limitations, we combine OSMD and Corrected KRUM, called corrected OSMD. Original OSMD first adaptively samples the most promising sellers based on utility, ensuring a focus on high-quality data. Then, instead of averaging updates from all sampled sellers (as in standard OSMD), Corrected KRUM selects the single update closest to the mean among this refined subset, mitigating the influence of Byzantine sellers. Moreover, OSMD supports fair revenue distribution by evaluating each seller's utility contribution—something Corrected KRUM alone does not offer. This hybrid approach ensures both robustness against Byzantine behavior and equitable compensation based on data quality.


\begin{algorithm}[H]
\caption{\prot \ AuctionContract ($C_{\mathrm{Auction}}$)}
\label{alg:auction}
\begin{algorithmic}[1]

\State {\bf Variables: }
    \State $buyers \gets \{\}$, $sellers \gets \{\}$, 
    \State \(activeAuctions<AuctionEnd, HighestBid, highe-$\\$stBidder> \gets \{\}$

\State {\bf Functions: }
\Procedure{userRegistration}{$userId, isBuyer$}
\If{$isBuyer$}
    $buyers[caller]=userId$
\Else  
    \text{ }$sellers[caller]=userId$
\EndIf
\EndProcedure\\

{\color {gray}//bid.$\xi$ and $B_{cur}$ denote metadata—tags and current block height, respectively. The procedure initiates an auction.}
\Procedure{startAuction}{bid}
  \If{$caller$ is buyer \textbf{and} no auctions with tag $bid.\xi$}
      \State $activeAuctions[bid.\xi] \gets \text{ } <B_{cur}+T,0,"">$
  \EndIf
  \State \Call{placeBid}{bid}
\EndProcedure\\
{\color {gray}//Place the bid in a new or existing auction; if the active auction tagged $bid.\xi$ is ongoing and gets a higher bid.}
\Procedure{placeBid}{bid}
  \If{Auction tagged $bid.\xi$ is active and receives a bid higher than the current maximum}

      \State $activeAuctions[bid.\xi].highestBid \gets bid.Amt$ 
      \State $activeAuctions[bid.\xi].highestBidder \gets caller$

  \EndIf
\EndProcedure
\State {\color {gray}// This method  will terminate the auction}
\Procedure{closeAuction}{tags}
  \If{Auction tagged $bid.\xi$ ends}
    \State $\mathcal{S}$ $\leftarrow$ \Call{IdentifyMatchingDatasets}{tags}
    \State \textbf{/*Invoke off-chain compute via \prot\ Core contract in Algorithm~\ref{alg:workflow}, returning seller's contribution ($\widehat{N}$) and CONE node's participation count ($\widehat{NC}$).*/}
    \State $amt=activeAuctions[bid.\xi].highestBid$
    \State
    \Call{onComputeComplete}{$amt$, $\widehat{N}$, $\widehat{NC}$}
  \EndIf
\EndProcedure

\State {\color {gray}//Revenue distribution post bid termination}
\Procedure{onComputeComplete}{bidAmt, sellerContribs, nodeCounts}
  \State $CONEShare$ $\leftarrow$ $bidAmt$ \(\times\) $0.30$
  \State $sellerShare$ $\leftarrow$ $bidAmt$ - $CONEShare$
  \State \(W_n \leftarrow \sum_{\text{(nodeId, c)} \in nodeCounts} c\)  
  \State \(W_s \leftarrow \sum_{\text{(sellerId, w)} \in sellerContribs} w\)
  \ForAll{($nodeId$, $c$) in $nodeCounts$}
    \State Transfer $CONEShare \times c / W_n$ to CONE node with ID $nodeId$.
  \EndFor
  \ForAll{(seller, w) in sellerContribs}
    \State Transfer $sellerShare$ $\times$ w / $W_s$) to seller with ID $sellerID$
  \EndFor
\EndProcedure

\State {\color {gray}//Return sellers set with dataset that matches the tags}
\Procedure{IdentifyMatchingDatasets}{tags}
  \State \Return \(\{\,ds\in\mathrm{datasets}\mid tags\subseteq ds.tags\}\)
\EndProcedure
\end{algorithmic}
\end{algorithm}

In summary, YODA combined with Corrected OSMD facilitates robust off-chain computation. YODA ensures resilience against Byzantine CONE nodes, while the Corrected OSMD hybrid defends against Byzantine sellers. The detailed explanation of each component-YODA and Corrected OSMD-is provided in Section~\ref{sec:offchain}.

Once the off-chain computation is completed, revenue is distributed using the $onComputeComplete$ method (Algorithm~\ref{alg:auction}). \prot\ allocates $30\%$ of the bid amount to \cone\ nodes for their computational support, while the remaining $70\%$ is distributed among the sellers for contributing data (lines 35–36 in Algorithm~\ref{alg:auction}). Seller earnings are further normalized by the quality of their data, as evaluated using the metric $M$ specified by the winning buyer. The workflow concludes with the distribution and disbursement of revenue.
\section{Off-chain Computation-YODA and Corrected OSMD}
\label{sec:offchain}
Having briefly introduced purpose of YODA in Algorithm~\ref{alg:workflow} and Corrected OSMD (Algorithm~\ref{alg:osmd}), as a part of off-chain computation, in the previous section, we now delve into their detailed workings. Additionally, we examine how these two algorithms interact via the blockchain to support \prot's overall objectives.

\subsection{\prot\ Core}
\label{sec:D2Mcore}
In this section, we detail the workings of \prot\ Core, which is activated once the on-chain auction determines the winning bid \(R^* = \{\xi,\; \text{Amt},\; \mathcal{K},\; M,\; \tau\}\) and selects the set of sellers \(\mathcal{S}\) whose datasets match the tags in \(\xi\). The primary objective of \prot\ Core is to produce a final model \(\widehat{\mathcal{K}}\), while ensuring three key goals: tolerating up to \(f_{\max}\) malicious \cone\ nodes, minimizing both communication rounds and computational overhead, and enabling precise reward allocation to each participant in proportion to their verified contribution to the computation.

Before we walk through the algorithm step by step, let us first clarify the notations used throughout. Here, \(t_{\max}\) denotes the maximum number of allowed global rounds, while \(w^t\) represents the global model weights at round \(t\). The sampling probabilities over the seller set \(\mathcal{S}\) in round \(t\) are denoted by \(p^t\). The cumulative seller contribution vector in round \(t\) is given by \(N^t\), and \(NC\) tracks the cumulative participation of compute nodes across rounds. The likelihood threshold required to reach consensus is represented by \(\theta\). Each mini-round \(i\) selects an execution set \(ES_i\), and \(L_{k,i}\) indicates the likelihood score of digest \(k\) after mini-round \(i\). Finally, \(\widehat{\mathcal{K}}\), \(\widehat{N}\), and \(\widehat{NC}\) denote the final trained model, the aggregated seller contributions, and the overall participation record of the \texttt{CONE} nodes, respectively.

We begin by initializing the round counter \(t\), model weights \(w^0\), uniform seller probabilities \(p^0\), and zeroed contribution counters \(N^0,NC\).  This creates a neutral starting state with no bias.

The outer loop (line 4 in Algorithm~\ref{alg:workflow}) continues until the model’s performance \(M(w^t)\) meets the buyer’s target \(\tau\) or we exhaust \(t_{\max}\) rounds-this bounds cost while striving for required quality. At the start of each round (outer loop), we initialize a mini‐round index \(i=0\) and reset all likelihood scores \(L_{k,0}=0\).  These scores will measure how strongly compute nodes agree on each candidate update.

The inner loop (line 6 in Algorithm~\ref{alg:workflow}) forms an Execution Set \(ES_i\) of size \(\mathit{size}_{i-1}+2^{i-1}\). 
By starting small and doubling, we minimize communication when nodes quickly agree, yet guarantee an honest majority will eventually dominate even if some nodes are malicious, which is inferred from YODA's security analysis.

Each node \(j\in ES_i\) logs its participation \(NC[j]\mathrel{{+}{=}}1\) (line 10 in Algorithm~\ref{alg:workflow}) for fair compute‐reward accounting, then runs the Corrected OSMD routine on \((w^t,p^t)\).  Corrected OSMD (detailed in Section~\ref{sec:osmd}) adaptively selects high‐value sellers, combines their updates into candidate weights \(w^t_{i,j}\), revises sampling probabilities \(p^t_{i,j}\), and updates per‐seller contributions \(N^t_{i,j}\).  The node then commits a cryptographic hash \(H(\cdot)\) of its result on‐chain (step 5 in Figure~\ref{fig:dist-design}), preventing any post‐hoc change and enabling verifiable audit during the next iteration.

\begin{algorithm}[H]
\caption{\prot\ Core Algorithm}
\label{alg:workflow}
\begin{algorithmic}[1]
\State \textbf{Input:} \(R^*,\,\mathcal{S},\,t_{\max}\)
\State \textbf{Output:} \(\widehat{\mathcal{K}},\,\widehat{N},\,\widehat{NC}\)
\State Initialize:
\[
  t\leftarrow0,\;
  w^0\leftarrow\mathcal{K},\;
  p^0\leftarrow\tfrac{1}{|\mathcal{S}|}\mathbf1,\;
  N^0\leftarrow\mathbf0,\;
  NC\leftarrow\mathbf0.
\]
\While{\(M(w^t)<\tau\) \textbf{and} \(t<t_{\max}\)}
  \State \(i\leftarrow0\); \(k\leftarrow 0 ; L_{0,0}\leftarrow0\)
  \While{\(\forall k:\;L_{k,i}\le\theta\)}
    \State \(i\leftarrow i+1\)
    \State \(ES_i\leftarrow\text{YODA sortition(size}=\mathit{base}+2^{i-1})\)
    \ForAll{\(j\in ES_i\)}
      \State \(NC[j]\mathrel{+}=1\)
      \State \(\{w^t_{i,j},p^t_{i,j},N^t_{i,j}\}\leftarrow\) Algorithm~\ref{alg:osmd}\((w^t,p^t)\)
      \State Commit \(H(w^t_{i,j},p^t_{i,j},N^t_{i,j})\) on‐chain
    \EndFor
    \ForAll{digest \(k\)}
      \State \(L_{k,i}\leftarrow\sum_{l=1}^i (2\,c_{k,l}-C_l)\,C_l\),\\
      where \(c_{k,l}\) = \# nodes in \(ES_l\) with hash \(k\), \(C_l=|ES_l|\)
    \EndFor
  \EndWhile
  \State Select \(k'\) with \(L_{k',i}>\theta\)
  \State \((w^{t+1},p^{t+1},N^{t+1})\leftarrow(w^t_{i,k'},p^t_{i,k'},N^t_{i,k'})\)
  \State \(t\leftarrow t+1\)
\EndWhile
\State \(\widehat{\mathcal{K}}\leftarrow w^t,\;\widehat{N}\leftarrow N^t,\;\widehat{NC}\leftarrow NC\)
\end{algorithmic}
\end{algorithm}

To determine which candidate update to accept, we compute for each hash \(k\) a \emph{likelihood score} \(L_{k,i} = \sum_{l=1}^i (2\,c_{k,l} - C_l)\,C_l\), where \(c_{k,l}\) is the number of nodes in YODA round \(l\) that produced hash \(k\), and \(C_l\) is the size of the execution set in that round. This score increases rapidly when many independent nodes agree, providing a strong statistical signal. We then compare \(L_{k,i}\) against a threshold \(\theta = \left(\ln\tfrac{1-\beta}{\beta}\right)\, \frac{2\,q(1-q)\,C\,(1-f_{\max})\,f_{\max}}{(1 - f_{\max}) - f_{\max}}\), which is derived to balance speed (i.e., minimizing the number of rounds) with security (i.e., filtering out Byzantine proposals). Here, \((1 - \beta)\) denotes the desired statistical confidence level, \(C\) is the total number of \cone\ nodes, \(q\) is the fraction of \cone\ nodes selected per round, and \(f_{\max}\) is the maximum allowed fraction of Byzantine nodes.


If no \(L_{k,i}>\theta\), we repeat with a larger $ES$. However on the other hands, once some \(k'\) satisfies \(L_{k',i}>\theta\), we accept and adopt its corresponding update:$(w^{t+1},\,p^{t+1},\,N^{t+1}) 
  \;=\;(w^t_{i,k'},\,p^t_{i,k'},\,N^t_{i,k'})$.


Finally, when the model performance \(M(w^t) \ge \tau\) or the maximum round limit \(t = t_{\max}\) is reached, we output the trained model \(\widehat{\mathcal{K}} = w^t\), the seller contribution vector \(\widehat{N} = N^t\), and the \cone\ node participation log \(\widehat{NC} = NC\), where each entry \(\langle id, \text{count} \rangle\) records how many times a \cone\ node with identifier \(id\) participated during training. 

These values feed back to the on‐chain contract (to method $onComputeComplete$ in Algorithm~\ref{alg:auction}) for precise, fair distribution of the payment \(Amt\).


Note that \prot\ Core is not fully off-chain. It interacts with the off-chain Corrected OSMD algorithm (Algorithm~\ref{alg:osmd}). Once an execution set $ES_i$ is selected for round $i$, an event is emitted, which is picked up by the participating \cone\ nodes. These nodes then execute Corrected OSMD off-chain. After all nodes submit their hash commitments of the computed weights-or once a block-based timeout is reached-another event is triggered to compute a likelihood score over all committed hashes in round $i$. If this score exceeds a defined threshold ($\theta$), the system proceeds to evaluate the submitted weights against the specified metrics. Otherwise, a new YODA sortition round is initiated, repeating the process. 

Next, we discuss the Corrected OSMD, which-as previously mentioned-combines the strengths of Corrected KRUM and OSMD. This module is invoked by the \prot\ Core described earlier.

\subsection{Corrected OSMD}
\label{sec:osmd}
Corrected OSMD is an algorithm designed to improve a machine learning model by intelligently selecting and aggregating updates from multiple data sellers. We know that all sellers are not equally useful and some might send malicious updates and we also want to be fair and efficient in choosing from whom we want updates. So we use Corrected OSMD as strategy to learn which sellers are most helpful over time.

Corrected OSMD is invoked by the \prot\ Core in any round \(t\) with two primary inputs: (1) the current model parameters (weights) \(w^{t}\) to be updated, and (2) a probability distribution \(p^t\), which determines the likelihood of selecting each seller. In addition, the algorithm requires several settings: \(K\), the number of sellers to sample in this round; \(\eta\), the learning rate that controls how quickly the model updates; \(\gamma_t\), the step size that limits the magnitude of each update; and \(\alpha\), a fairness parameter ensuring every seller has at least a minimal chance of being selected.

In each round, we randomly select $K$ sellers based on the probabilities in $p^t$ (line 3 in Algorithm~\ref{alg:osmd}). These sellers receive the current model and return their proposed updates, based on their own private data. We record which sellers were selected and update their access counts in $N^{t+1}$. For each seller, we then calculate how helpful their update was—the algorithm computes the gradient-based vector $\widehat{u}^t$, where each component $\widehat{u}_i^t$ is defined as:
\[
\widehat{u}_i^t = \frac{\sum_{k=0}^{K-1} \mathds{1}\{i^{t,k} = i\}}{K p_i^t} \times \big(U(w^t + \gamma_t g_i^t) - U(w^t)\big),
\]
where $\mathds{1}\{i^{t,k} = i\}$ is an indicator function that counts the occurrences of the seller $i$ in the sampled batch. The term $U(w^t + \gamma_t g_i^t) - U(w^t)$ represents the utility difference due to the model update. The scaling factor accounts for the sampling probability $p_i^t$ and ensures unbiased estimation (Line 8 in Algorithm~\ref{alg:osmd}).

Next, using these utility scores, we update the sampling distribution $p^t$. To update $p^t$ we solve the following optimization problem: \[
       p^{t+1} \;=\;
       \arg\min_{q\in\mathcal{A}}
         \bigl\langle q,\widehat u^t\bigr\rangle
         +D_\Phi(q\|p^t)
     \]
This formulation balances exploitation (preferring sellers with low estimated utility values, which indicate high contributions) and exploration (not deviating too far from the previous distribution 
$p^t$). The term $\bigl\langle q,\widehat u^t\bigr\rangle$ encourages selecting more helpful sellers, while the Bregman divergence $D_\Phi(q\|p^t)$ regularizes the change in distribution to maintain stability and fairness.

The update is computed using the Online Mirror Descent (OMD) solver. For a detailed derivation and theoretical guarantees of OMD solver, we refer the reader to the original paper \cite{ZhaoOSMD}.

Finally instead of simply averaging the updates (which could be affected by malicious or noisy inputs) we use a robust method called Corrected KRUM~\cite{DuBlockchainDML} which selects the single update closest to the mean among the refined subset, mitigating the influence of Byzantine sellers and produce the updated model parameters $w^{t+1}$. This ensured that only the most reliable updates are used, leading to a more accurate and secure learning process over time.


\begin{algorithm}[tbp]
\caption{Corrected OSMD Sampler and Aggregator}
\label{alg:osmd}
\begin{algorithmic}[1]
   \State \textbf{Input:}
      \begin{itemize}
   \item round $t$, batch size $K$
       \item OSMD learning rate $\eta$ 
       \item optimization stepsizes $\{\gamma_t\}_{t=0}^{T-1}$
       \item parameter $\alpha \in [0,1]$, $\mathcal{A} = \mathcal{P}_{n-1} \cap [\alpha/n, \infty)^n$
       \item the sampling distribution $p^t$ and model parameter $w^t$.
       \item the number of accesses for each dataset $N^t$
   \end{itemize}
   \State \textbf{Output:} 
   \begin{itemize}
       \item trained model's parameters $w^{t+1}$
       \item number of accesses for each seller $N^{t+1}(i) \ \forall \ i \in [n]$
       \item sampling distribution $p^{t+1}$
   \end{itemize}
   \State Sample \(K\) sellers \(S^t=\{i^{t,0},\dots,i^{t,K-1}\}\) with replacement from \(p^t\).
   \ForAll{\(i\in S^t\)} 
     \State Send \(w^t\) to seller \(i\); receive update \(g_i^t=\mathcal{O}_i(w^t)\).
   \EndFor
   \State Update counts: 
     \(N^{t+1}(i)=N^t(i)+\sum_k\mathbf{1}\{i^{t,k}=i\}\).
   \State Compute utility estimates
     \[
       \widehat u_i^t
       =\frac{\sum_k\mathds{1}\{i^{t,k}=i\}}{K\,p_i^t}
        \bigl[\,U(w^t+\gamma_t\,g_i^t)-U(w^t)\bigr].
     \]
   \State Solve for new sampling distribution
     \[
       p^{t+1} \;=\;
       \arg\min_{q\in\mathcal{A}}
         \bigl\langle q,\widehat u^t\bigr\rangle
         +D_\Phi(q\|p^t)
     \]
     via the OMD solver \cite{ZhaoOSMD}.
   \State Aggregate model updates with Corrected Krum \cite{DuBlockchainDML}:
     \[
       w^{t+1}
       \;=\;\text{Krum}\bigl\{\,w^t+\gamma_t\,g_i^t\,:\,i\in S^t\}\,.
     \]
\end{algorithmic}
\end{algorithm}


\section{Game theoretic analysis of \prot}
\label{sec:game}



To ensure the security of \prot\ against Byzantine \cone\ nodes, we rely on the security guarantees provided by YODA. According to YODA’s analysis, the probability that malicious weights are accepted is upper bounded by
\[
\beta = \frac{1}{1 + \exp\!\left(
    \frac{\theta(1 - 2f_{\max})}{
        2q(1 - q)M(1 - f_{\max})f_{\max}
    }
\right)}.
\]
All parameters in the above expression have been defined previously in Section~\ref{sec:D2Mcore}. By choosing a sufficiently large value of $\theta$, the probability of malicious success, $\beta$, can be made negligibly small. 

However, since the entities participating in the \prot\ may be rational as their actions are driven by corresponding payoffs, a game-theoretic security analysis of \prot\ becomes essential. This section focuses exclusively on such an analysis.

We model the system as a strategic game with three players: the Buyer ($B$), the Seller ($S$), and the CONE node ($C$). The underlying blockchain is assumed to be secure against all forms of adversarial behavior; hence, it is not explicitly modeled as a player.

The Buyer has two available strategies:  
(1) behave honestly by placing a valid bid for an auction ($B_H$), or  
(2) behave maliciously by spamming the system with multiple bids to dominate the auction ($B_A$).  Each bid incurs a transaction fee $f$, making spamming economically inefficient. Consequently, the Buyer maximizes his utility by adopting the honest strategy $B_H$. Furthermore, as high-quality data ultimately benefits the Buyer, collusion with \cone\ nodes and/or Sellers provides no additional advantage and is therefore excluded from the game model.

Having said that, in this section we show conditions under which (i) the \emph{Seller} finds it a dominant strategy to submit honest/high-quality data, and (ii) an individual \emph{CONE node} finds it a dominant strategy to validate honestly. 


\subsection*{1. Game strategies and states}


The seller has two strategies: 
(1) Submit high-quality data for training (\(S_H\)), or 
(2) Submit low-quality data (\(S_L\)), possibly colluding with a CONE node via a bribe \(b\) for considering his weight irrespective of the data quality. The seller's payout is proportional to the quality \(q \in [0,1]\).

The CONE node has two strategies: 
(1) Validate honestly (\(C_H\)), or 
(2) Collude with the seller (\(C_C\)) by accepting bribe $b$. 




Because either the 
\cone\ node or the seller being malicious would not yield higher payoffs than behaving honestly, rational participants have no incentive to deviate from honest behavior. For instance, if all \cone\ nodes act honestly, any attempt by the seller to bribe them results in a loss for the seller. Conversely, if a 
\cone\ node behaves maliciously, it gains nothing by accepting or approving data of low quality. Therefore, in this game, the protocol can exist only in two  states: (i) both are honest, or (ii) both are malicious. The details of the protocol state for one-shot interaction is as follows:

\begin{itemize}
  \item State Honest (H): All the entities, i.e., Buyer, Seller, and \cone\ nodes are honest. This means that likelihood score of honest digest reach the threshold ($\theta$) in first round itself, with the execution set size $n_0$. 
  
  \item State Collude-Success (CS): Seller plays $S_L$, offers bribe, and all (or required coalition of) nodes accept the malicious weights immediately; collusion succeeds with probability $\beta$ (function of system design).
\end{itemize}

To summarize above, the game start with the initial state, say $I_0$, and based on the action from \cone\ nodes and sellers, it reaches either state $H$ or state $CS$. Next we will prove that it is dominant strategy to behave honestly for \cone\ nodes and sellers. For this we assume, Seller total payout when weights are accepted is $W$ and the seller's share is proportional to quality $q\in[0,1]$ (honest $q$ high, malicious $q$ low). Collusion may change effective weight to $\tilde q>q$. Furthermore, A collusion attempt has probability $\beta$ of immediate success and the \cone\ node's effective earning $E$, after they perform the computation to increase the likelihood score of weights (honest or malicious), where total number of \cone\ nodes are $N$.  


\subsection*{2. Payoff functions (one-shot expected payoff)}

\paragraph{Seller.} Let the share of bid amount that is allocated to $A_s$. Each seller gets a share of $A_s$ based on the quality of data they provide that matches the tag passed with bid. So the effective earning of a seller when behaved honestly, is $u_s(S_H)=qA_n$. However, as discussed earlier seller may behave maliciously by bribing a \cone\ nodes and falsify the quality to $q'$, where $q'>q$. IN this case the effective payoff for the seller is $u_s(S_L) = (1-\beta)qA_n + \beta(q'A_n-Nb)$


\paragraph{\cone\ node.}
Let the total number of CONE nodes be denoted by \( N \). Let the total reward  for CONE participation (for the whole protocol run) be \( A_c \). This entire amount is distributed across all executions and then allocated to nodes proportionally to how many times each node was selected across all rounds.

The seller offers a bribe \( b \) to a CONE node for accepting false weights. If the node tries to collude, the collusion succeeds with probability \( \beta \) (i.e., the node receives the bribe only when the cheating attempt succeeds).


The protocol runs for \( r \) rounds. When nodes behave honestly, it requires \( r_{\text{hon}} \) rounds; when a few nodes behave maliciously, it requires \( r_{\text{mal}} \geq r_{\text{hon}} \) rounds.

The per-round execution set (ES) growth rule is given by $ES_i = ES_{i-1} + 2^{\,i-1}, \quad i \ge 2$, and we denote \( ES_1 = s_0 \) (a given base size). 

We assume that each round’s execution set is chosen uniformly at random from the \( N \) nodes (via sortition), independently across rounds. Hence, the probability that a given node is selected in round \( i \) is \( \tfrac{ES_i}{N} \). This follows the standard YODA assumption of unbiased random sampling. From the recurrence with \( ES_1 = s_0 \): $ES_i = s_0 + \sum_{k=1}^{i-1} 2^{k-1} = s_0 + (2^{i-1} - 1)$.
Hence, $ES_i = s_0 + 2^{i-1} - 1$. The total number of executions (selections) across all \( r \) rounds (counting multiplicity) is
\[
S_r = \sum_{i=1}^{r} ES_i = \sum_{i=1}^{r} (s_0 + 2^{i-1} - 1)
      = r(s_0 - 1) + \sum_{i=1}^{r} 2^{i-1}.
\]
Because, $\sum_{i=1}^{r} 2^{i-1} = 2^{r} - 1$, $S_r$ is simplified to $S_r = r(s_0 - 1) + 2^{r} - 1$.

By linearity and under the uniform sampling assumption, the probability that a fixed node is selected in round \( i \) is \( \tfrac{ES_i}{N} \). Therefore, the expected total number of times a fixed node is selected across \( r \) rounds is
$\mathbb{E}[\text{selections of node}] 
= \sum_{i=1}^{r} \frac{ES_i}{N} 
= \frac{S_r}{N}$.

Given this, the expected share of \( A_c \) for a given node is therefore $\mathbb{E}[\text{reward from } A_c]
= A_c \cdot \frac{\mathbb{E}[\text{selections of node}]}{\mathbb{E}[\text{total selections}]}$. By substituting the value we have, 


\[
u_c(C_H)=\mathbb{E}[\text{reward from } A_c]
= A_c \cdot \frac{S_r / N}{S_r}
= \frac{A_c}{N}.
\]
From this we can observe that, under the uniform-random selection model, the expected share of the \( A_c \) for any fixed node is independent of the number of rounds \( r \) and independent of how fast \( ES_i \) grows. So to penalize the nodes that attempt to increase the likelihood score of the malicious weights, \prot\ doesn't pay the per-node share to the node implicated in producing the (fraudulent) digest and hence colluders risk losing the early-round payments as well. 

To represented it formally, let \( q_{\text{caught}}(r) \) be the probability that a colluding node is excluded from receiving the protocol share when termination occurs at round \( r \). Then, the expected protocol reward for the colluder becomes

The expected protocol reward for a colluding node is \( A_c (1 - q_{\text{caught}}(r)) \). Thus, the total expected payoff for colluding node is $u_c(C_C)=
= \frac{A_c}{N} (1 - q_{\text{caught}}(r)) + \beta b$. Note that \( q_{\text{caught}}(r) \) increases with \( r \) since the execution set grows exponentially, adding more honest nodes and thus more evidence for detection. Consequently, \( u_c(C_C) \) decreases with \( r \).

\subsection{Dominant strategy for players in the game}
Given the defined payoffs, we now prove that honest behavior constitutes a dominant strategy. This is established through two lemmas and a theorem: Lemma~\ref{lm:p1} demonstrates that honest behavior is the dominant strategy for the seller, Lemma~\ref{lm:p2} establishes the same for the \cone\ nodes, and Theorem~\ref{thm:p1} concludes that honesty is the dominant strategy for all players.

\begin{lemma}
    \label{lm:p1}
    For the seller, honesty is a dominant strategy if 
    $N b \ge (q' - q)A_n$.
\end{lemma}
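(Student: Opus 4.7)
The plan is to verify the claim by a direct comparison of the two expected payoffs $u_s(S_H)$ and $u_s(S_L)$ derived earlier in the payoff-functions subsection, and then solve for the threshold on the bribe $b$ that makes honesty weakly dominant. Since the seller's strategy space is binary, showing $u_s(S_H) \ge u_s(S_L)$ for every parameter configuration consistent with the hypothesis suffices for dominance.

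First I would restate the two payoffs:
\begin{equation*}
u_s(S_H) \;=\; q A_n, \qquad u_s(S_L) \;=\; (1-\beta)\, q A_n \,+\, \beta\bigl(q' A_n - N b\bigr),
\end{equation*}
and then write down the dominance inequality $u_s(S_H) \ge u_s(S_L)$. Substituting and collecting the $q A_n$ terms on one side yields
\begin{equation*}
q A_n - (1-\beta)\, q A_n \;\ge\; \beta\bigl(q' A_n - N b\bigr),
\end{equation*}
which simplifies to $\beta \, q A_n \ge \beta(q' A_n - N b)$. Assuming the non-degenerate case $\beta > 0$, I divide through by $\beta$ (the degenerate case $\beta = 0$ is immediate, since collusion never succeeds and $u_s(S_L) = q A_n = u_s(S_H)$) to obtain $q A_n \ge q' A_n - N b$, which rearranges directly to $N b \ge (q' - q) A_n$, matching the hypothesis.

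The only subtlety worth highlighting is the sign and range conditions on the variables: by construction $q' > q$ so $(q'-q)A_n > 0$, and thus the threshold is nontrivial; moreover, the condition is independent of $\beta$, meaning that as long as the aggregate bribe cost $N b$ (paid to every colluding node) exceeds the additional revenue $(q'-q)A_n$ obtainable from inflating the perceived quality, the seller strictly prefers $S_H$ for every success probability $\beta \in (0,1]$. I expect no technical obstacle here: the derivation is a one-line algebraic manipulation once the payoffs are substituted, and the main conceptual point to emphasize in the write-up is simply that the factor of $\beta$ cancels, making the dominance condition a clean structural constraint on the bribe economics rather than a statement about how often collusion succeeds.
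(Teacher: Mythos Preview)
Your proposal is correct and mirrors the paper's own proof almost exactly: both compute $u_s(S_H)-u_s(S_L)$, simplify to $\beta\bigl(Nb-(q'-q)A_n\bigr)$, and use $\beta>0$ to conclude the stated threshold. Your explicit handling of the degenerate $\beta=0$ case and the remark that the condition is independent of $\beta$ are minor expository additions but do not change the approach.
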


\begin{proof}
As explained earlier, the seller's utilities when behaving honestly and maliciously are
\[
u_s(S_H)=qA_n,\qquad
u_s(S_L)=(1-\beta)qA_n+\beta\big(q' A_n - N b\big).
\]
Given this, the difference $\Delta$ is calculated as, 
\[
\Delta := u_s(S_H)-u_s(S_L)
= qA_n -\big[(1-\beta)qA_n+\beta(q'A_n - N b)\big].
\]
Simplifying the above gives
\[
\Delta
= qA_n - qA_n + \beta qA_n - \beta q' A_n + \beta N b
= \beta\big(N b - (q'-q)A_n\big).
\]
Since $\beta$ is the negligible but non zero probability, for honest strategy to dominate $\Delta\ge 0$ must hold, which leads to $N b \;\ge\; (q'-q)A_n$. According to our assumption, this condition holds.

It is important to note that \( N \) may represent either the total number of \cone\ nodes or the subset of \cone\ nodes selected in the execution sets, depending on how the seller implements bribery. Regardless of the seller’s bribery strategy, the proof remains valid. Moreover, the condition \( N b \ge (q' - q)A_n \) can be satisfied by increasing the total number of \cone\ nodes and/or the size of the execution set in Algorithm~\ref{alg:workflow}.


\end{proof}

\begin{lemma}
    \label{lm:p2}
    Then, honest behaviour $C_H$ is a (weakly) dominant strategy for the \cone\ node if $q_{\text{caught}}(r) \cdot \frac{A_c}{N} \;\ge\; \beta b$.
\end{lemma}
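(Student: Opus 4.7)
The plan is to mirror the structure of the proof of Lemma~\ref{lm:p1}: substitute the two expected payoffs of the \cone\ node that were already derived in Section~\ref{sec:game}, form the difference $\Delta := u_c(C_H) - u_c(C_C)$, and then show that $\Delta \ge 0$ is algebraically equivalent to the hypothesis of the lemma. Since the lemma claims only \emph{weak} dominance, the goal is just to get a non-strict inequality; any strict-inequality case gives strict dominance as a bonus.

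Concretely, first I would recall
\[
u_c(C_H) = \frac{A_c}{N},\qquad
u_c(C_C) = \frac{A_c}{N}\bigl(1 - q_{\text{caught}}(r)\bigr) + \beta b,
\]
which are exactly the expressions derived above from the uniform-random sortition over the exponentially growing execution sets $ES_i = s_0 + 2^{i-1}-1$. Next, I would compute
\[
\Delta = \frac{A_c}{N} - \frac{A_c}{N}\bigl(1 - q_{\text{caught}}(r)\bigr) - \beta b
      = q_{\text{caught}}(r)\cdot \frac{A_c}{N} - \beta b,
\]
and observe that $\Delta \ge 0$ is exactly the hypothesized condition $q_{\text{caught}}(r)\cdot \tfrac{A_c}{N} \ge \beta b$. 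Because $C_H$ yields at least as much as $C_C$ under this condition regardless of the seller's realized action, $C_H$ is (weakly) dominant, and strictly dominant whenever the inequality is strict.

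The one genuinely non-routine point, and the piece I expect to be the main obstacle, is justifying that the stated condition is meaningful across the life of the protocol, since $q_{\text{caught}}(r)$ depends on the terminating round $r$. I would handle this by invoking the monotonicity remark from the payoff derivation: because the execution set $ES_i$ grows as $s_0 + 2^{i-1} - 1$, each additional round brings in strictly more honest evaluators and hence strengthens detection, so $q_{\text{caught}}(r)$ is non-decreasing in $r$. Therefore, it suffices to verify the inequality at $r = r_{\text{hon}}$ (the smallest number of rounds that can occur), and monotonicity of the left-hand side together with constancy of the right-hand side $\beta b$ extends the guarantee to all subsequent rounds. I would close by remarking that the designer can tune $A_c$, $N$, or the base size $s_0$ (which raises $q_{\text{caught}}$) so that the condition is always satisfied against any feasible bribe $b$, rendering $C_H$ a robust dominant strategy and thus, combined with Lemma~\ref{lm:p1}, setting up the overall theorem.
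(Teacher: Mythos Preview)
Your proposal is correct and follows essentially the same route as the paper: recall the two payoffs $u_c(C_H)$ and $u_c(C_C)$, form $\Delta = q_{\text{caught}}(r)\cdot \tfrac{A_c}{N} - \beta b$, and read off that $\Delta \ge 0$ is exactly the stated hypothesis. The paper also appeals to the monotonicity of $q_{\text{caught}}(r)$ in $r$ to justify the condition, so your discussion of that point is aligned as well.
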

\begin{proof}
As discussed earlier we know that, 
\[
u_c(C_H) = \frac{A_c}{N}, 
\qquad
u_c(C_C) = \frac{A_c}{N}\big(1 - q_{\text{caught}}(r)\big) + \beta b.
\]
The difference in the honest and malicious utility is:
\[
\Delta u_c=u_c(C_H) - u_c(C_C)
= \frac{A_c}{N} - \left[ \frac{A_c}{N}\big(1 - q_{\text{caught}}(r)\big) + \beta b \right].
\] 
which gives, 
\[
\Delta u_c
= q_{\text{caught}}(r) \cdot \frac{A_c}{N} - \beta b.
\]
For honest behavior to be dominant, the utility difference \( \Delta u_c \) must satisfy \( \Delta u_c > 0 \), which holds under our assumption. This assumption is justified since \( q_{\text{caught}}(r) \) increases with the number of rounds, and under malicious behavior, YODA requires more rounds to terminate compared to the honest case. However, the expected gain for a honest \cone\ nodes is independent on the number of rounds.
\end{proof}




\begin{theorem}
    \label{thm:p1}
Then honest behavior is a dominant strategy for every player (the seller and each CONE node) if $\beta (q'-q) A_n \le q_{\mathrm{caught}}(r) A_c$.

\end{theorem}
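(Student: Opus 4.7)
The plan is to treat the bribe $b$ as a strategic parameter chosen by the seller, and to show that the theorem's inequality makes the set of bribes that could tempt \emph{both} players away from honesty empty. By Lemma~\ref{lm:p1}, the seller strictly prefers $S_L$ only when $Nb < (q'-q)A_n$; by Lemma~\ref{lm:p2}, a CONE node strictly prefers $C_C$ only when $\beta b > q_{\mathrm{caught}}(r)A_c/N$. A successful joint deviation therefore requires
\[
\frac{q_{\mathrm{caught}}(r) A_c}{N\beta} \;<\; b \;<\; \frac{(q'-q) A_n}{N}.
\]

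First I would establish that this open interval is empty precisely when the lower endpoint meets or exceeds the upper one, i.e., when $(q'-q)A_n/N \le q_{\mathrm{caught}}(r)A_c/(N\beta)$. Multiplying through by $N\beta > 0$ recovers the theorem's hypothesis $\beta(q'-q)A_n \le q_{\mathrm{caught}}(r)A_c$. Under this hypothesis every non-negative choice of $b$ violates at least one of the two deviation inequalities, so for any fixed bribe value at least one player has a strictly better honest response than any deviation.

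Next I would argue that emptiness of this interval promotes honesty to a (weakly) dominant strategy for \emph{both} players by walking through the $2{\times}2$ strategy matrix. If the seller posts $b \ge (q'-q)A_n/N$ (high enough to tempt CONE), then from the Lemma~\ref{lm:p1} computation $u_s(S_L) \le u_s(S_H)$, so $S_H$ remains a best response regardless of whether the CONE node plays $C_H$ or $C_C$. Symmetrically, if the seller posts $b$ below $(q'-q)A_n/N$, the hypothesis forces $\beta b \le q_{\mathrm{caught}}(r)A_c/N$, so Lemma~\ref{lm:p2} makes $C_H$ a best response for the CONE node; the collusion then succeeds with effective probability $0$, which collapses $u_s(S_L)$ to $qA_n = u_s(S_H)$ and again makes $S_H$ a best response for the seller. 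In every branch, neither player can improve by deviating from honesty given any fixed action of the other, which is the definition of a weakly dominant strategy equilibrium.

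The hard part will be the interpretive bookkeeping rather than algebra: one must treat $b$ as endogenous to the seller (whereas Lemmas~\ref{lm:p1}--\ref{lm:p2} implicitly hold $b$ fixed), and one must evaluate $q_{\mathrm{caught}}(r)$ at the adversarial round count $r_{\mathrm{mal}}$, since it is the detection probability along the collusion path -- amplified by the exponentially growing execution sets -- that actually deters a CONE node. Once these conventions are pinned down, the proof reduces to the elementary empty-interval argument sketched above, and the dominance conclusion of the theorem follows immediately.
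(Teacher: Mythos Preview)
Your argument and the paper's use the same algebraic pivot---comparing the two bounds on $b$ coming from Lemmas~\ref{lm:p1} and~\ref{lm:p2} and reducing to $\beta(q'-q)A_n \le q_{\mathrm{caught}}(r)A_c$---but the logical framing differs. The paper treats the bribe as a parameter the protocol can \emph{enforce}: it shows the interval $[(q'-q)A_n/N,\; q_{\mathrm{caught}}(r)A_c/(\beta N)]$ is nonempty, picks any $b$ in it, and invokes both lemmas simultaneously at that $b$. You instead treat $b$ as the seller's strategic choice and argue the complementary interval of ``jointly tempting'' bribes is empty, then run a case split over all $b$ to conclude weak dominance without needing the protocol to fix the bribe level. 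Your route is arguably the more natural game-theoretic reading, since a bribe is by definition off-protocol and hard to ``enforce''; it also makes explicit the $2{\times}2$ best-response reasoning that the paper leaves implicit. The trade-off is that your Case~2 relies on the informal step that a CONE node playing $C_H$ drives the seller's effective success probability to zero, which goes slightly beyond the fixed-$\beta$ payoff expressions the paper writes down; the paper sidesteps this by simply asserting dominance at the enforced $b$.
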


\begin{proof}
From Lemma~\ref{lm:p1}, honesty is dominant for the seller provided $N b \ge (q' - q)A_n$, while from Lemma~\ref{lm:p2} honesty is dominant for a CONE node provided $q_{\mathrm{caught}}(r)\cdot\frac{A_c}{N} \ge \beta b$.

These two inequalities impose a lower and an upper bound on the feasible per-node bribe \(b\).
A value \(b\) satisfying both bounds exists iff the lower bound does not exceed the upper bound. 
\[
\frac{(q' - q)A_n}{N} \le \frac{q_{\mathrm{caught}}(r)A_c}{\beta N},
\] 
which gives 
\[
\beta (q' - q) A_n \le q_{\mathrm{caught}}(r) A_c.
\]
According to our assumption this condition holds which states that there exists a \(b\) that simultaneously satisfies the seller's and the CONE node's constraints~\footnote{Edge case: if \(\beta=0\) then the seller's and node's payoffs coincide irrespective of \(b\) and the players are at least weakly indifferent; the inequality above becomes \(0\le q_{\mathrm{caught}}(r)A_c\), which holds trivially.}. For any such \(b\) (which \prot\ can enforce), neither the seller nor any CONE node can increase their payoff by unilaterally deviating to malicious behavior, so honesty is a dominant strategy for all players.

\end{proof}

\section{\prot \  Evaluation}
\label{sec:evaluation}
In this section, we will present the evaluation of \prot \  along with the findings from this evaluation. Before diving into the results, let’s first discuss the experimental setup.

\subsection{Experimental Setup}
The experiment is conducted on a 3 high-performance machine configured with dual AMD EPYC 7452 32-core processors, providing a total of 64 physical cores and 128- threads (with 4 threads per core). We run 100-250 users on this machine (50-200 sellers and 50 buyers) and a Ethereum blockchain network consists of 50 nodes\footnote{This is because the top 50 miners 
(by mining power) contribute to around 99.98\% of total mining 
power of the real Ethereum network, with the most powerful miner 
controlling ${\sim}33\%$ of the total mining power.}, which we setup using ethereum-package of kurtosis~\cite{ethkurtosis}


For computational tasks (\cone\ nodes), the system utilizes a high-performance 24GB NVIDIA RTX A5000 GPU running on Ubuntu 20.04, enabling efficient parallel processing—especially for machine learning and model training. A \cone\ node network comprising 50 nodes has been deployed on this machine.


Furthermore, we have implemented \prot\ using Python3.0. and PyTorch~\cite{paszke2019pytorchimperativestylehighperformance}, an open-source machine learning framework 
The details of the model, the number of nodes that contribute to each \prot\ components, metrics used for the evaluation, and the datasets we used are summarized in Table \ref{tab:setup}.


Real-world federated learning systems often operate under non-IID settings due to user behavior or geographic differences. To mimic this, we construct non-IID client datasets from MNIST, FashionMNIST and CIFAR-10 by allocating data with imbalanced label distributions using a Dirichlet-based partitioning strategy(\(\alpha=0.5\))~\cite{li2021federatedlearningnoniiddata}. Each dataset is split into three distinct subsets: a training set, a validation set, and a test set. The validation and test datasets are kept separate to evaluate the model’s generalization ability during and after training. 


\begin{table}[H]
\centering
\begin{tabular}{l|l}
\toprule
\textbf{Parameter} & \textbf{Value} \\
\midrule
Model Architecture & 1 Convolution layer Neural Network \\
Datasets & CIFAR10, MNIST, FashionMNIST \\
Optimization & Adam \\
Learning Rate & 0.01 \\
Batch Size & 64 \\
Number of Sellers & 50-200 \\
Number of Arbiters & 50 \\
Number of Rounds & 50 \\
Epochs per Round & 3 \\
\bottomrule
\end{tabular}
\caption{Experimental setup specifications used for evaluating \prot \ }
\label{tab:setup}
\end{table}

\subsection{Parameters and Metrics}
\label{sec:paraAndMetrics}
To thoroughly assess the performance of our proposed system, we conducted a series of experiments using a range of parameters and evaluation metrics.

We evaluated the system’s accuracy across rounds by varying the number of rounds from 1 to 50 on three datasets. While precision and recall were also measured, they exhibited trends similar to accuracy and showed negligible variation; therefore, only accuracy results are reported. To understand convergence, we determined the saturation point, defined as the minimum number of rounds required for accuracy to reach its maximum achievable value, beyond which additional rounds yield negligible improvement. We also assessed the system’s scalability by measuring the time required to complete 50 rounds for different numbers of sellers, specifically 50, 100, 150, and 200, for each dataset, providing insights into computational efficiency as network size grows.

To evaluate Byzantine resilience, we simulated scenarios where 20\%, 30\%, 40\%, and 50\% of \cone\ nodes acted maliciously by returning random or stale updates. Under these conditions, both accuracy and completion time were measured to assess the system’s fault tolerance and efficiency. Additionally, ablation studies were conducted to analyze the contribution of key components: (i) No Krum Aggregation, where Corrected OSMD aggregation was disabled to highlight vulnerability to model poisoning, and (ii) No Yoda Consensus, where the consensus mechanism was disabled while retaining Corrected OSMD aggregation, allowing us to study model divergence and consistency across nodes under adversarial scenarios.

With this setup, we performed the experiments for a total of four times and the averaged results of which are elaborated in the next section (section \ref{sec:results}).

\subsection{Evaluation Result}
\label{sec:results}


\pgfplotsset{small,label style={font=\fontsize{8}{9}\selectfont},legend style={font=\fontsize{4.5}{8}\selectfont},height=3.8cm,width=0.35\textwidth}
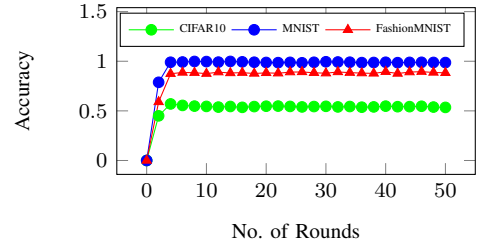
\begin{figure}
\centering

\begin{tikzpicture}
\begin{axis}[
    enlargelimits=0.10,
    legend columns=3,
    legend pos=north east,
    xlabel={No. of Rounds},
    ylabel= {Accuracy},
    grid=minor,
    xtick={0,10,20,30,40,50},
    xmax=50,
    ymin=0,
    ymax=1.4,
    ]
    
    \addplot[green, mark=*] table [x=Rounds, y=CIFAR10_Accuracy, col sep=comma] {data/accuracy_data.csv};
    \addplot[blue, mark=*] table [x=Rounds, y=MNIST_Accuracy, col sep=comma] {data/accuracy_data.csv}; 
    \addplot[red, mark=triangle*] table [x=Rounds, y=FashionMNIST_Accuracy, col sep=comma] {data/accuracy_data.csv};


    \addlegendentry{CIFAR10}
    \addlegendentry{MNIST}
    \addlegendentry{FashionMNIST}
\end{axis}
\end{tikzpicture}

\caption{Variation in the accuracy over number of rounds for different datasets.} 
\label{fig:metrics}
\end{figure}

We evaluated the system’s accuracy over varying rounds across three datasets. For MNIST and FashionMNIST, \prot\ reached over 90\% and 80\% accuracy, respectively, within just five rounds, after which performance stabilized. In contrast, CIFAR-10 accuracy plateaued around 60\%, as shown in Figure \ref{fig:metrics}. These results indicate that \prot\ rapidly converges to its optimal performance within a 5 rounds. The relatively lower  saturation on CIFAR-10 can be attributed to the stronger effect of non-IID data distributions, where higher class diversity across nodes impacts convergence.



To assess the scalability of the system, we measured the time required to complete 20 rounds for varying numbers of sellers, where we observed that the time taken by \prot\ increases almost linearly with the number of sellers in the system, demonstrating strong scalability, as shown in Figure \ref{fig:time}. The reported time corresponds to training the model for 20 rounds. However, as previously observed, saturation is typically achieved by round 5. Therefore, the actual required training time is approximately four times less than what is shown in the plot.


\pgfplotsset{small,label style={font=\fontsize{8}{9}\selectfont},legend style={font=\fontsize{4.5}{8}\selectfont},height=3.8cm,width=0.35\textwidth}
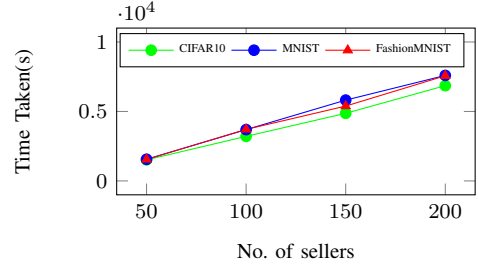
\begin{figure}
\centering

\begin{tikzpicture}
\begin{axis}[
    enlargelimits=0.10,
    legend columns=3,
    legend pos=north east,
    xlabel={No. of sellers},
    ylabel= {Time Taken(s)},
    grid=minor,
    xtick={50,100,150,200},
    xmax=200,
    ymin=0,
    ymax=10000,
    ]
    
    \addplot[green, mark=*] table [x=No_of_Sellers, y=CIFAR10_Time, col sep=comma] {data/time_vs_sellers.csv};
    \addplot[blue, mark=*] table [x=No_of_Sellers, y=MNIST_Time, col sep=comma] {data/time_vs_sellers.csv}; 
    \addplot[red, mark=triangle*] table [x=No_of_Sellers, y=FashionMNIST_Time, col sep=comma] {data/time_vs_sellers.csv};
    \addlegendentry{CIFAR10}
    \addlegendentry{MNIST}
    \addlegendentry{FashionMNIST}

\end{axis}
\end{tikzpicture}

\caption{Time taken to run 20 rounds of the protocol on different datasets, measured against the number of sellers.} 
\label{fig:time}
\end{figure}

\pgfplotsset{small,label style={font=\fontsize{8}{9}\selectfont},legend style={font=\fontsize{4.5}{8}\selectfont},height=3.8cm,width=0.35\textwidth}
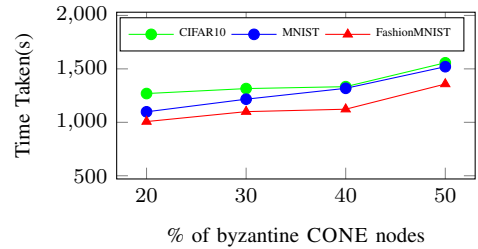
\begin{figure}
\centering

\begin{tikzpicture}
\begin{axis}[
    enlargelimits=0.10,
    legend columns=3,
    legend pos=north east,
    xlabel={\% of byzantine \cone\ nodes},
    ylabel= {Time Taken(s)},
    grid=minor,
    xtick={20,30,40,50},
    xmax=50,
    ymin=600,
    ymax=1900,
    ]
    
    \addplot[green, mark=*] table [x=Byzantine, y=CIFAR10_Time, col sep=comma] {data/time_round_byzantine.csv};
    \addplot[blue, mark=*] table [x=Byzantine, y=MNIST_Time, col sep=comma] {data/time_round_byzantine.csv}; 
    \addplot[red, mark=triangle*] table [x=Byzantine, y=FashionMNIST_Time, col sep=comma] {data/time_round_byzantine.csv};
    \addlegendentry{CIFAR10}
    \addlegendentry{MNIST}
    \addlegendentry{FashionMNIST}

\end{axis}
\end{tikzpicture}

\caption{Time taken to run the protocol for 20 rounds on different datasets in the presence of different percentage of byzantine nodes.} 
\label{fig:byz_time}
\end{figure}


Furthermore, we analyzed the system's runtime across rounds 1 to 20, excluding the model training time, under varying proportions of Byzantine nodes. As shown in Figure~\ref{fig:byz_time}, the protocol continues to demonstrate near-linear scalability. Importantly, the total time taken by the system does not grow exponentially with an increasing number of Byzantine nodes, indicating the protocol’s robustness and efficiency under adversarial conditions-despite a slightly steeper slope when the percentage of adversarial \cone\ nodes increases.

In the Byzantine setup, \prot\ maintains stable accuracy and performance comparable to the honest setting when the fraction of Byzantine nodes is up to 30\%, as shown in Figures~\ref{fig:CFAR10_byzantine_prot}, \ref{fig:MNIST_byzantine_prot}, and \ref{fig:FashionMNIST_byzantine_prot}. Beyond this threshold, accuracy begins to decline, with the effect being most pronounced on the CIFAR-10 dataset. For MNIST and FashionMNIST, the degradation is comparatively smaller, though noticeable fluctuations persist, suggesting some instability in the system. This behavior can be attributed to the fact that YODA requires a greater number of rounds to counter higher levels of adversarial participation. The sharper decline on CIFAR-10, relative to MNIST and FashionMNIST, can further be explained by the increased complexity of the dataset.


To isolate the contribution of YODA and Corrected OSMD in \prot, we ran ablation studies removing each component. Without Corrected OSMD, CIFAR-10 accuracy degrades sharply as adversarial participation grows: at 20\% adversaries it stabilizes around 50--55\%, but beyond 30\% it oscillates between 10\% and 55\% (Figure~\ref{fig:CFAR10_byzantine_without_crum}). Fashion-MNIST is more stable, maintaining 88--90\% at 20\% and mostly above 80\% at 30\%, but plunging below 20\% at 40--50\% adversaries (Figure~\ref{fig:FashionMNIST_byzantine_without_crum}). MNIST shows the highest robustness, remaining near-perfect at 20--30\% adversaries with fluctuations appearing only at 40--50\% (Figure~\ref{fig:MNIST_byzantine_without_crum}). Overall, performance is stable up to 20\% adversaries but becomes highly unstable beyond this, as random sampling without Corrected OSMD can select poor-quality weights in some rounds, causing abrupt drops.


\pgfplotsset{small,label style={font=\fontsize{8}{9}\selectfont},legend style={font=\fontsize{4.5}{8}\selectfont},height=3.6cm,width=\textwidth}
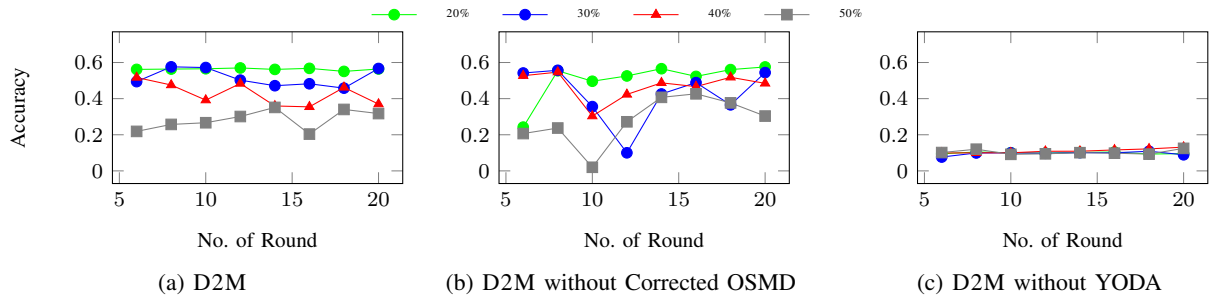
\begin{figure*}
\centering

\begin{tikzpicture}
\begin{customlegend}[legend columns=5,legend style={align=left,draw=none,column sep=2ex},
        legend entries={
                        \textsc{20\%} ,
                        \textsc{30\%} ,
                        \textsc{40\%} ,
                        \textsc{50\%} ,
                        }]
        \addlegendimage{green, mark=*}
        \addlegendimage{blue, mark=*}
        \addlegendimage{red, mark=triangle*}
        \addlegendimage{gray, mark=square*}
        
\end{customlegend}
\end{tikzpicture}

\begin{subfigure}{0.30\linewidth}
\centering
\begin{tikzpicture}
\begin{axis}[
    enlargelimits=0.10,
    legend columns=3,
    legend pos=north east,
    xlabel={No. of Round},
    ylabel= {Accuracy},
    grid=minor,
    xmax=20,
    ymin=0,
    ymax=0.7,
    ]
    
    \addplot[green, mark=*] table [x=Communication_Round_1, y=20_1, col sep=comma] {data/d2m.csv};
    \addplot[blue, mark=*] table [x=Communication_Round_1, y=30_1, col sep=comma] {data/d2m.csv}; 
    \addplot[red, mark=triangle*] table [x=Communication_Round_1, y=40_1, col sep=comma] {data/d2m.csv};
    \addplot[gray, mark=square*] table [x=Communication_Round_1, y=50_1, col sep=comma] {data/d2m.csv};

\end{axis}
\end{tikzpicture}
\caption{\prot}
\label{fig:CFAR10_byzantine_prot}
\end{subfigure}
\begin{subfigure}{0.30\linewidth}
\centering
\begin{tikzpicture}
\begin{axis}[
    enlargelimits=0.10,
    legend columns=3,
    legend pos=north east,
    xlabel={No. of Round},
    grid=minor,
    xmax=20,
    ymin=0,
    ymax=0.7,
    ]
    
    \addplot[green, mark=*] table [x=Communication_Round_1, y=20_1, col sep=comma] {data/no_krum.csv};
    \addplot[blue, mark=*] table [x=Communication_Round_1, y=30_1, col sep=comma] {data/no_krum.csv}; 
    \addplot[red, mark=triangle*] table [x=Communication_Round_1, y=40_1, col sep=comma] {data/no_krum.csv};
    \addplot[gray, mark=square*] table [x=Communication_Round_1, y=50_1, col sep=comma] {data/no_krum.csv};

\end{axis}
\end{tikzpicture}
\caption{\prot\ without Corrected OSMD}
\label{fig:CFAR10_byzantine_without_crum}
\end{subfigure}
\begin{subfigure}{0.30\linewidth}
\centering
\begin{tikzpicture}
\begin{axis}[
    enlargelimits=0.10,
    legend columns=3,
    legend pos=north east,
    xlabel={No. of Round},
    grid=minor,
    xmax=20,
    ymin=0,
    ymax=0.7,
    ]
    
    \addplot[green, mark=*] table [x=Communication_Round_1, y=20_1, col sep=comma] {data/no_yoda.csv};
    \addplot[blue, mark=*] table [x=Communication_Round_1, y=30_1, col sep=comma] {data/no_yoda.csv}; 
    \addplot[red, mark=triangle*] table [x=Communication_Round_1, y=40_1, col sep=comma] {data/no_yoda.csv};
    \addplot[gray, mark=square*] table [x=Communication_Round_1, y=50_1, col sep=comma] {data/no_yoda.csv};

\end{axis}
\end{tikzpicture}
\caption{\prot\ without YODA}
\label{fig:CFAR10_byzantine_without_yoda}
\end{subfigure}

\caption{Time taken to run the protocol for 20 rounds on CFAR10 in the presence of different percentage of byzantine nodes.} 
\label{fig:CFAR10_byzantine}
\end{figure*}

\pgfplotsset{small,label style={font=\fontsize{8}{9}\selectfont},legend style={font=\fontsize{4.5}{8}\selectfont},height=3.6cm,width=\textwidth}
\begin{figure*}
\centering

\begin{tikzpicture}
\begin{customlegend}[legend columns=5,legend style={align=left,draw=none,column sep=2ex},
        legend entries={
                        \textsc{20\%} ,
                        \textsc{30\%} ,
                        \textsc{40\%} ,
                        \textsc{50\%} ,
                        }]
        \addlegendimage{green, mark=*}
        \addlegendimage{blue, mark=*}
        \addlegendimage{red, mark=triangle*}
        \addlegendimage{gray, mark=square*}
        
\end{customlegend}
\end{tikzpicture}

\begin{subfigure}{0.30\linewidth}
\centering
\begin{tikzpicture}
\begin{axis}[
    enlargelimits=0.10,
    legend columns=3,
    legend pos=north east,
    xlabel={No. of Round},
    ylabel= {Accuracy},
    grid=minor,
    xmax=20,
    ymin=0,
    ymax=1,
    ]
    
    \addplot[green, mark=*] table [x=Communication_Round_1, y=20_1, col sep=comma] {data/d2m_MNIST.csv};
    \addplot[blue, mark=*] table [x=Communication_Round_1, y=30_1, col sep=comma] {data/d2m_MNIST.csv}; 
    \addplot[red, mark=triangle*] table [x=Communication_Round_1, y=40_1, col sep=comma] {data/d2m_MNIST.csv};
    \addplot[gray, mark=square*] table [x=Communication_Round_1, y=50_1, col sep=comma] {data/d2m_MNIST.csv};

\end{axis}
\end{tikzpicture}
\caption{\prot}
\label{fig:MNIST_byzantine_prot}
\end{subfigure}
\begin{subfigure}{0.30\linewidth}
\centering
\begin{tikzpicture}
\begin{axis}[
    enlargelimits=0.10,
    legend columns=3,
    legend pos=north east,
    xlabel={No. of Round},
    grid=minor,
    xmax=20,
    ymin=0,
    ymax=1,
    ]
    
    \addplot[green, mark=*] table [x=Communication_Round_1, y=20_1, col sep=comma] {data/no_krum_MNIST.csv};
    \addplot[blue, mark=*] table [x=Communication_Round_1, y=30_1, col sep=comma] {data/no_krum_MNIST.csv}; 
    \addplot[red, mark=triangle*] table [x=Communication_Round_1, y=40_1, col sep=comma] {data/no_krum_MNIST.csv};
    \addplot[gray, mark=square*] table [x=Communication_Round_1, y=50_1, col sep=comma] {data/no_krum_MNIST.csv};

\end{axis}
\end{tikzpicture}
\caption{\prot\ without Corrected OSMD}
\label{fig:MNIST_byzantine_without_crum}
\end{subfigure}
\begin{subfigure}{0.30\linewidth}
\centering
\begin{tikzpicture}
\begin{axis}[
    enlargelimits=0.10,
    legend columns=3,
    legend pos=north east,
    xlabel={No. of Round},
    grid=minor,
    xmax=20,
    ymin=0,
    ymax=1,
    ]
    
    \addplot[green, mark=*] table [x=Communication_Round_1, y=20_1, col sep=comma] {data/no_yoda_MNIST.csv};
    \addplot[blue, mark=*] table [x=Communication_Round_1, y=30_1, col sep=comma] {data/no_yoda_MNIST.csv}; 
    \addplot[red, mark=triangle*] table [x=Communication_Round_1, y=40_1, col sep=comma] {data/no_yoda_MNIST.csv};
    \addplot[gray, mark=square*] table [x=Communication_Round_1, y=50_1, col sep=comma] {data/no_yoda_MNIST.csv};

\end{axis}
\end{tikzpicture}
\caption{\prot\ without YODA}
\label{fig:MNIST_byzantine_without_yoda}
\end{subfigure}

\caption{Time taken to run the protocol for 20 rounds on MNIST in the presence of different percentage of byzantine nodes.} 
\label{fig:MNIST_byzantine}
\end{figure*}

\pgfplotsset{small,label style={font=\fontsize{8}{9}\selectfont},legend style={font=\fontsize{4.5}{8}\selectfont},height=3.6cm,width=\textwidth}
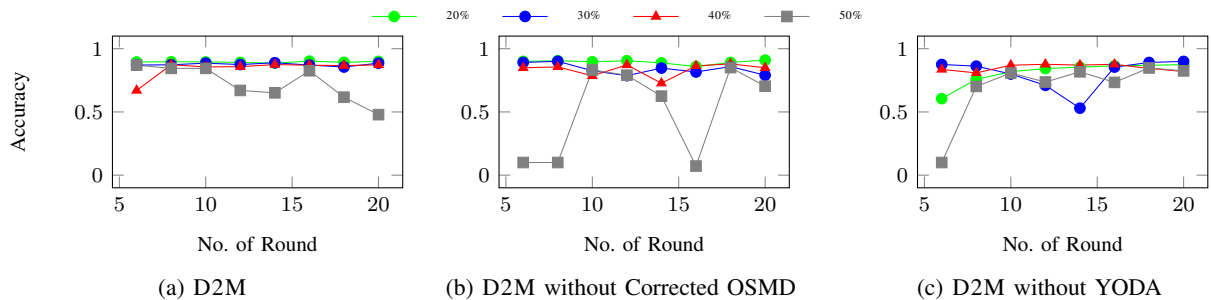
\begin{figure*}
\centering

\begin{tikzpicture}
\begin{customlegend}[legend columns=5,legend style={align=left,draw=none,column sep=2ex},
        legend entries={
                        \textsc{20\%} ,
                        \textsc{30\%} ,
                        \textsc{40\%} ,
                        \textsc{50\%} ,
                        }]
        \addlegendimage{green, mark=*}
        \addlegendimage{blue, mark=*}
        \addlegendimage{red, mark=triangle*}
        \addlegendimage{gray, mark=square*}
        
\end{customlegend}
\end{tikzpicture}

\begin{subfigure}{0.30\linewidth}
\centering
\begin{tikzpicture}
\begin{axis}[
    enlargelimits=0.10,
    legend columns=3,
    legend pos=north east,
    xlabel={No. of Round},
    ylabel= {Accuracy},
    grid=minor,
    xmax=20,
    ymin=0,
    ymax=1,
    ]
    
    \addplot[green, mark=*] table [x=Communication_Round_1, y=20_1, col sep=comma] {data/d2m_FashionMNIST.csv};
    \addplot[blue, mark=*] table [x=Communication_Round_1, y=30_1, col sep=comma] {data/d2m_FashionMNIST.csv}; 
    \addplot[red, mark=triangle*] table [x=Communication_Round_1, y=40_1, col sep=comma] {data/d2m_FashionMNIST.csv};
    \addplot[gray, mark=square*] table [x=Communication_Round_1, y=50_1, col sep=comma] {data/d2m_FashionMNIST.csv};

\end{axis}
\end{tikzpicture}
\caption{\prot}
\label{fig:FashionMNIST_byzantine_prot}
\end{subfigure}
\begin{subfigure}{0.30\linewidth}
\centering
\begin{tikzpicture}
\begin{axis}[
    enlargelimits=0.10,
    legend columns=3,
    legend pos=north east,
    xlabel={No. of Round},
    grid=minor,
    xmax=20,
    ymin=0,
    ymax=1,
    ]
    
    \addplot[green, mark=*] table [x=Communication_Round_1, y=20_1, col sep=comma] {data/no_krum_FashionMNIST.csv};
    \addplot[blue, mark=*] table [x=Communication_Round_1, y=30_1, col sep=comma] {data/no_krum_FashionMNIST.csv}; 
    \addplot[red, mark=triangle*] table [x=Communication_Round_1, y=40_1, col sep=comma] {data/no_krum_FashionMNIST.csv};
    \addplot[gray, mark=square*] table [x=Communication_Round_1, y=50_1, col sep=comma] {data/no_krum_FashionMNIST.csv};

\end{axis}
\end{tikzpicture}
\caption{\prot\ without Corrected OSMD}
\label{fig:FashionMNIST_byzantine_without_crum}
\end{subfigure}
\begin{subfigure}{0.30\linewidth}
\centering
\begin{tikzpicture}
\begin{axis}[
    enlargelimits=0.10,
    legend columns=3,
    legend pos=north east,
    xlabel={No. of Round},
    grid=minor,
    xmax=20,
    ymin=0,
    ymax=1,
    ]
    
    \addplot[green, mark=*] table [x=Communication_Round_1, y=20_1, col sep=comma] {data/no_yoda_FashionMNIST.csv};
    \addplot[blue, mark=*] table [x=Communication_Round_1, y=30_1, col sep=comma] {data/no_yoda_FashionMNIST.csv}; 
    \addplot[red, mark=triangle*] table [x=Communication_Round_1, y=40_1, col sep=comma] {data/no_yoda_FashionMNIST.csv};
    \addplot[gray, mark=square*] table [x=Communication_Round_1, y=50_1, col sep=comma] {data/no_yoda_FashionMNIST.csv};

\end{axis}
\end{tikzpicture}
\caption{\prot\ without YODA}
\label{fig:FashionMNIST_byzantine_without_yoda}
\end{subfigure}

\caption{Time taken to run the protocol for 20 rounds on FashionMNIST in the presence of different percentage of byzantine nodes.} 
\label{fig:FashionMNIST_byzantine}
\end{figure*}

In the No-YODA setting—where Corrected OSMD operates without consensus verification—the model becomes highly vulnerable to poisoning. CIFAR-10 fails to converge even after 20 rounds (Figure~\ref{fig:CFAR10_byzantine_without_yoda}), while MNIST and FashionMNIST maintain reasonable accuracy only under low adversarial ratios (20--30\%) (Figures~\ref{fig:MNIST_byzantine_without_yoda}, \ref{fig:FashionMNIST_byzantine_without_yoda}). Accuracy collapses at 40--50\% adversaries, showing that Corrected OSMD alone cannot prevent consensus on malicious seller updates. On more complex datasets such as CIFAR-10, adversarial influence dominates and drives convergence toward corrupted gradients. These results highlight that integrating YODA consensus with Corrected OSMD is essential for robustness under Byzantine conditions.

\section{Related Work}
\label{sec:related}
\subsection{Centralized Data Marketplaces}
Commercial offerings such as AWS Data Exchange \cite{AWS}, Data Republic \cite{DataRepublic}, Snowflake \cite{Snowflake}, Dawex \cite{Dawex}, Google Cloud Analytics Hub \cite{Google}, and Alibaba Cloud Data Marketplace \cite{Alibaba} provide rich compliance, access controls, and turnkey B2B data procurement, but remain single-authority systems without on-chain auditability or Byzantine fault tolerance.

\subsection{Data Marketplace Frameworks}
\cite{CastroDSC} formalizes incentive-compatible protocols for consortium data-sharing markets, but operates under permissioned governance and does not exploit public ledgers for Byzantine resilience.
\cite{Hardjono2019} outlines an “innovation-driven data marketplace” that stresses governance frameworks, yet omits any privacy-preserving compute primitives or adaptive auction mechanisms.

\subsection{Blockchain-Backed IoT and Consortium Platforms}
\cite{Dong} build an Ethereum-based IoT data-trading platform with smart-contract escrow and periodic checkpoints, yet lack privacy-preserving computation and dynamic incentive schemes.
\cite{banerjee2020} implement a Hyperledger Fabric content marketplace ensuring fair payment distribution, but remain in a permissioned setup without scalable off-chain computation or statistical consensus for adversarial settings.
\cite{Ho2021} (“Morse”) provides a blockchain-based data exchange for secure transactions but does not integrate off-chain statistical consensus to guard against malicious participants.

\subsection{Federated-Learning-Centric Marketplaces}
\cite{li2023martfl} (“martFL”) couples robust federated learning with verifiable incentives, yet assumes a central aggregator and omits any off-chain Byzantine consensus.
\cite{liu2021dealer} (“Dealer”) offers an end-to-end model marketplace with differential privacy guarantees, but relies on a central service.
\cite{Liu2022} (“PiLi”) leverages TEEs within a permissioned blockchain for secure data trading, without support for fully permissionless auctions or multi-round aggregation protocols.

\subsection{Towards Fully Decentralized FL and Resource Markets}
\cite{li2021blade} (“BLADE-FL”) introduces a blockchain-assisted framework for decentralized federated learning with optimal resource allocation, yet does not address the broader data-market architecture or on-chain revenue sharing.
\cite{s2025unifyflenablingdecentralizedcrosssilo} (“UnifyFL”) enables cross-silo federated learning with dropout resilience, but omits formal incentive mechanisms, auction protocols, and byzantine-tolerant aggregation among FL peers.

Each of these systems contributes key elements—secure auctions, escrow, TEEs, Byzantine-resilient aggregation, or resource-efficient FL—but none present an end-to-end, permissionless framework that unifies on-chain auctions, off-chain adaptive federated learning with formal aggregation scoring, and provably incentive-compatible revenue sharing as achieved in \prot.

\section{Limitations and Future work}
\label{sec:futureWork} 
While \prot \ demonstrates a strong foundation for decentralized, privacy-preserving data sharing, several opportunities remain for enhancing its robustness, expressiveness, and real-world applicability.

\begin{enumerate}
    \item {\textbf{Adaptive and Strategic Bidding Defense:} Although the current bidding mechanism is efficient and transparent, it does not account for strategic behaviors such as front-running, collusion among buyers, or fake bids to inflate auction prices. In future work, we plan to incorporate cryptographic solutions such as commit-reveal schemes, verifiable delay functions (VDFs), or other auction mechanisms to eliminate front-running and enforce fair timing guarantees.}
    \item {\textbf{Evaluation on Complex, Heterogeneous Datasets:} Our current evaluation uses simple models on well-structured image datasets. Future work should extend \prot\ to multi-modal data (text, tabular, sensor), unbalanced or domain-shifted seller distributions, and more complex architectures to enable a more robust assessment.
}
    \item {\textbf{Alternate Consensus Mechanisms and Layer-2 Optimizations:} \prot\ currently uses Ethereum for smart contracts and relies on YODA and MIRACLE for consensus. We plan to explore the use of alternate consensus protocols and Layer-2 scaling solutions to reduce on-chain latency and gas costs while preserving verifiability and finality.}
    \item {\textbf{Real-world Deployment and Integration with IoT/Edge Networks:} We further plan to deploy \prot\ on an edge/IoT testbed, with sellers as sensor owners providing private data and buyers as analytic service providers. This setup enables evaluating \prot\ under realistic bandwidth constraints, device heterogeneity, and dynamic seller churn.
}
    \item {\textbf{Model Auto-Specification Based on Buyer Context:} Currently, buyers must specify the model architecture and training setup in each bid. A natural extension is a model recommendation or synthesis engine that, given a buyer’s use case, data needs, and budget, automatically proposes a suitable training task and architecture, leveraging transfer learning and meta-learning for plug-and-play model selection.
}
\end{enumerate}

\bibliographystyle{ieeetr}
\bibliography{references}

\end{document}